\newcommand{\mymathlcl}[1]{
$\begin{array}{lcl}
#1
\end{array}$
}
\newcommand{\keyword}[1]{\textsf{\small{#1}}}
\newcommand{\grmeq}{::=}
\newcommand{\grmor}{\mid}
\newcommand{\getInPos}[1]{[#1]}
\newcommand{\ands}{\ \land\ }
\newcommand{\defk}{\keyword{def}}
\newcommand{\nullk}{\keyword{null}}
\newcommand{\classk}{\keyword{class}}
\newcommand{\layoutk}{\keyword{layout}}
\newcommand{\recordk}{\keyword{rec}}
\newcommand{\newk}{\keyword{new}}
\newcommand{\poolsk}{\keyword{pools}}
\newcommand{\localsk}{\keyword{locals}}
\newcommand{\thisk}{\keyword{this}}
\newcommand{\nonek}{{\keyword{none}}}
\newcommand{\dom}[1]{\operatorname{dom}(#1)}
\newcommand{\many}[1]{#1s}
\newcommand{\concretetype}{\mathit{t}}
\newcommand{\abstracttype}{\mathit{pt}}
\newcommand{\pooltype}[1]{[#1]}
\newcommand{\ownerT}[2]{ #1\langle #2 \rangle }
\newcommand{\scallc}[3]{#1.#2( #3 ) }
\newcommand{\freadc}[2]{ #1 . #2 }
\newcommand{\fwritec}[3]{ #1 . #2 = #3 }
\newcommand{\newc}[1]{ \newk\ #1 }
\newcommand{\setDef}[1]{\mathit{ #1 }}
\newcommand{\cds}{\setDef{cd}}
\newcommand{\tds}{\setDef{ld}}
\newcommand{\cls}{\setDef{rd}}
\newcommand{\fds}{\setDef{fd}}
\newcommand{\pds}{\setDef{pd}}
\newcommand{\pbds}{\setDef{pbd}}
\newcommand{\mds}{\setDef{md}}
\newcommand{\lds}{\setDef{vd}}
\newcommand{\lpds}{\setDef{lpd}}
\newcommand{\lvds}{\setDef{lvd}}
\newcommand{\heap}{\mathcal{X}}
\newcommand{\pool}{\pi}
\newcommand{\sframe}{\Phi}
\newcommand{\object}{\omega}
\newcommand{\record}{\rho}
\newcommand{\cluster}{\kappa}
\renewcommand{\address}{\alpha}
\newcommand{\hlv}{\beta} %% just value
\newcommand{\natnumber}{\mathds{N}}
\newcommand{\funk}[1]{\keyword{fun}}
\newcommand{\rulename}[1]{\textsc{\footnotesize{#1}}}
\newcommand{\hlrule}[3]{
  \begin{array}{l}
    \rulename{[#1]}\hfill
    \\ \begin{array}{c}#2 \\\hline #3\end{array}
  \end{array}}
\newcommand{\typeRule}[3]{
	\begin{array}{l}
		\rulename{[#1]}\hfill
		\\ \begin{array}{c}#2 \\ \hline #3\end{array}
	\end{array}}
\newcommand{\reduces}[2]{#1\leadsto#2}
\newcommand{\types}[3]{#1\vdash#2\colon #3} %
\newcommand{\fields}[1]{\mathcal F\!\mathit{s}({#1})}
\newcommand{\flookup}[2]{\mathcal F({#1}, {#2})}
\newcommand{\plookup}[2]{\mathcal P({#1}, {#2})}
\newcommand{\mlookup}[2]{\mathcal M({#1}, {#2})}
\newcommand{\getclass}[1]{\mathcal C(#1)}
\newcommand{\getlayout}[1]{\mathcal L(#1)}
\newcommand{\offset}[2]{\mathcal W(#1, #2)}
\newcommand{\owners}[1]{\mathcal{P}\!\mathit{s}(#1)}
\newcommand{\wftype}[2]{#1\vdash#2}
\newcommand{\wfhlclass}[2]{#1\vdash #2}
\newcommand{\wflayout}[2]{#1 \vdash #2}
\newcommand{\wfHL}[3]{#1 \vDash #2, #3}
\newcommand{\wfHeap}[1]{\vDash #1}
\newcommand{\wfagrees}[3]{#1 \vDash #2\triangleleft  #3}
\newcommand{\wfweakOK}[3]{#1 \vDash #2\colon  #3}
\newcommand{\context}{\Gamma}
\lstdefinelanguage{ohmm}{
  keywords=[1]{class,def,if,then,else,new,for,in,do,let,null,true,false,void,boolean,number,string,while,rec,of,none,None,this,not,gt,layout,pool,heap,pools,locals,return},
  breaklines=false,
  morecomment=[l]{//},
  morecomment=[s]{/*}{*/},
  morestring=[b]",
  tabsize=2,
  literate=
    {->}{$\rightarrow$}2
    {::}{$\parallel$}2
    {<<}{$\langle$}1
    {>>}{$\rangle$}1
    {leftbrace}{$\{$}1  % To be used in, \eg, footnotes
    {rightbrace}{$\}$}1
    {_0}{$_\textsf{\scriptsize 0}$}1
    {_1}{$_\textsf{\scriptsize 1}$}1
    {_2}{$_\textsf{\scriptsize 2}$}1
    {_3}{$_\textsf{\scriptsize 3}$}1
    {_4}{$_\textsf{\scriptsize 4}$}1
    {_5}{$_\textsf{\scriptsize 5}$}1
}
\definecolor{bluefield}{rgb}{0.32,0.65,1}
\definecolor{greenfield}{rgb}{0.44,0.75,0.25}
\definecolor{yellowfield}{rgb}{0.86,0.74,0.14}
\definecolor{orangefield}{rgb}{0.87,0.42,0.06}
\newcommand{\eg}{\emph{e.g.}}
\newcommand{\ie}{\emph{i.e.}}
\renewcommand{\c}[1]{\lstinline@#1@}
\begin{document}

\title{Safely Abstracting Memory Layouts}

\author{Juliana Franco}
% \authornote{Work done while at Imperial College London.}
\email{juliana.franco@microsoft.com}
\affiliation{\institution{Microsoft Research Cambridge}}

\author{Alexandros Tasos}
\email{a.tasos17@imperial.ac.uk}
\affiliation{\institution{Imperial College London}}

\author{Sophia Drossopoulou}
\email{s.drossopoulou@imperial.ac.uk}
\affiliation{\institution{Imperial College London}}

\author{Tobias Wrigstad}
\email{tobias.wrigstad@it.uu.se}
\affiliation{\institution{Uppsala University}}

\author{Susan Eisenbach}
\email{s.eisenbach@imperial.ac.uk}
\affiliation{\institution{Imperial College London}}

\renewcommand{\shortauthors}{J. Franco, A. Tasos, S. Drossopoulou,
  T. Wrigstad, S. Eisenbach}

\newcommand{\SHAPES}{{\textsf{SHAPES}}}
\newcommand{\Pad}{\vspace*{1ex plus .5ex minus .5ex}}
\newcommand{\Tigthen}{\vspace*{-1ex plus .5ex minus .5ex}}

\begin{abstract}
  Modern architectures require applications to make effective use
  of caches to achieve high performance and hide memory latency.
  This in turn requires careful consideration of placement of data
  in memory to exploit spatial locality, leverage hardware
  prefetching and conserve memory bandwidth.
  In unmanaged languages like C++, memory optimisations are
  common, but at the cost of losing object abstraction and memory
  safety.
  In managed languages like Java and C\#, the abstract view of
  memory and proliferation of moving compacting garbage collection
  does not provide enough control over placement and layout.

  We have proposed \SHAPES{}, a type-driven abstract placement
  specification that can be integrated with object-oriented languages
  to enable memory optimisations. \SHAPES{} preserves both memory and
  object abstraction. In this paper, we formally specify the \SHAPES{}
  semantics and describe its memory safety model.
\end{abstract}

\maketitle

\section{Introduction}
Modern computers use hierarchies of memory caches to hide memory
latency~\cite{wulf1995hitting}. Accessing data from the
bottom of the hierarchy --- from main memory --- is often an order of magnitude slower
than reading from the top --- from the fastest memory cache. Whenever the CPU needs to read
from a particular memory location, it first tries to read from the
top-most cache. If it succeeds, then the data is delivered at very
little cost. Otherwise, a \emph{cache miss} has occurred, and the CPU
tries to obtain the data from the next cache level. In the worst 
case scenario, data must be fetched from main memory. This commonly causes
the data --- and 
adjacent data --- to be cached. How much adjacent data is cached and how
much data can be cached at what level varies across different hardware.
Caches typically have sizes in kilobytes or a few megabytes, meaning that
bringing only relevant data into cache is important for proper cache utilisation.

In programs whose performance is memory-bound,
reorganising data in memory to 
reduce the number of cache misses can have
enormous performance impact \cite{wulf1995hitting, calder-1998}. Writing cache-friendly
code typically amounts to allocating contiguously
data that is accessed together and in patterns recognisable to the \emph{hardware prefetcher},
allowing it to anticipate a program's data needs ahead-of-time. %, and fetch it to reduce latency. 
As a concrete example, in a language like C, one
can allocate an array of structures (not pointers to structures), and
keep all the objects of the same data structure in that array, in the
order in which the objects are most frequently accessed. Thus, when an object is
fetched to cache, the next one to be read, is potentially fetched as
well. To improve cache utilisation, programmers often
\emph{split objects} across multiple arrays, in order to keep
the \emph{hot fields} --- those most often used --- of different
objects together. This is called transforming an array of structures
into a structure of arrays. Both layouts are depicted in
Figure~\ref{fig:layouts}. As a consequence of this representational
change, a complex datum is no longer possible to reference by a single
pointer. 

\begin{figure}[ht]
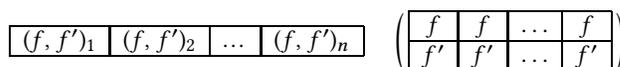

\begin{tabular}{|c|c|c|c|}
  \hline %% --------------------------------------------
  ($f$, $f'$)$_1$ & ($f$, $f'$)$_2$ & \ldots & ($f$, $f'$)$_n$ \\
  \hline %% --------------------------------------------
\end{tabular}
\quad
$\bigg(\,
\begin{array}{|c|c|c|c|}
  \hline %% --------------------------------------------
  f & f & \ldots & f \\
  \hline %% --------------------------------------------
  f' & f' & \ldots & f' \\
  \hline %% --------------------------------------------
\end{array}
\,\bigg)$
\caption{Array of structs (left) vs. struct of arrays (right).
  $(f, f')$ denotes a struct (object) with fields $f$ and $f'$. Left,
  each \emph{cell} is an object. Right, each \emph{column} holds the data for one object.}
\label{fig:layouts}
\end{figure}

Optimisations like the one sketched above are common in
low-level languages such as C or C++, where programmers have control
over where and how memory is allocated. Unfortunately,
manipulating data layouts in memory is complex, error-prone, and
pollutes the logic of the program with layout concerns.

In managed languages like Java and C\#, the abstract level
at which memory is handled, as well as the presence of moving garbage
collectors, make many memory optimisations impossible. 
To this end, we have proposed a
type-driven approach to abstractly express layout concerns, called \SHAPES{} \cite{Onward2017}, whose aim is to allow the
application of layout optimisations in 
high-level, managed, object-oriented languages. \SHAPES{} protects the object abstraction,
allowing an object to be referenced by --- and manipulated via -- a
single pointer, regardless of its physical representation.

\paragraph{Contributions}

In previous work, we have introduced the \SHAPES{} idea, and recap
the key moving parts in \S\,2. In this paper, we make the
following contributions:
\Pad
\begin{compactitem}[--]
\item We formalise \SHAPES{} (\S\,\ref{sec:formalism}) through a high-level object-oriented
  calculus where classes are parameterised by layout
  specifications. The formalism is an important component of
  proving soundness of \emph{uniform access} --- allowing programmers to
  write \c{x.f} to access the field \c{f} of the object $o$
  pointed to by \c{x}, regardless of how $o$ is laid out in
  memory.
  We specify the dynamic semantics (\S\,\ref{sec:dynamics}) as
  well as the static semantics (\S\,\ref{sec:statics}).

\item We define the main invariants of \SHAPES{} in terms of type
  safety and memory safety (\S\,\ref{sec:meta}).
\end{compactitem}
\Pad
\noindent
\S\,\ref{sec:related} discusses related work and 
\S\,\ref{sec:final} concludes.

% \textbf{TODO: cite here some work on pooling and splitting and mention
% which ones show performance improvements.}  Moreover, preliminary
% evaluations shown in the predecessors of this
% paper~\cite{martin_thesis, onward} give us confidence in that it is
% possible to speed-up the execution of programs by reorganising their
% data in memory.

\section{Getting into \SHAPES{}}

The \SHAPES{} vision gives the programmer control of how data
structures are laid out in memory in  object-oriented,
managed programming languages. Although programmers have
control over how their data structures are allocated, memory is still 
abstract, and all operations are type and memory safe.

The \SHAPES{} approach
makes classes parametric with \emph{layouts} --- abstract regions (\emph{pools})
which collect their objects together in physical memory, optionally
using some form of \emph{splitting strategy} to keep hot fields together in memory. The design
delays the choice of the physical representation of a data structure to
instantiation-time, rather than declaration-time.
Thus, it is possible to reuse the same data structure
declaration with multiple layouts. Reflecting layout choices in
types thus serves the dual purpose of separating layout concerns
from business logic and guiding compilers' generation of efficient
code for allocating and accessing data. It is also key to a
uniform access model (\eg{} \c{x.f}), regardless of how an object
may be laid out in memory.
In a typical statically typed OO language, \c{x.f} is translated
by a compiler into loading a value at the address pointed to by
\c{x} plus the offset of \c{f} in the type of \c{x}. Because the
same \c{x.f} may manipulate \c{x}'s that point to objects with different layout, 
we must take care to propagate the layout information to ensure the
correct code is emitted, to ensure memory safety. 

\subsection{Using \SHAPES{} for Improved Cache Utilisation}

As a concrete example, Listing 1 is a partial program with a list
of \lstinline!Student!s, with 1\,000's of elements at run-time.
(For simplicity, each student holds a pointer to the next
student.)

\newcommand{\DIFF}[1]{\textcolor{magenta}{#1}}
\noindent
\begin{minipage}{.37\linewidth}
\begin{lstlisting}[numbers=none, captionpos=b, caption={A linked list.},frame=tb, aboveskip=3mm, belowskip=3mm]
class Student {
  age: int
  supervisor: Professor
  next: Student
  // ... 
  def getAge(): int {
    return age
  }
}


var list = new Student
\end{lstlisting}
\end{minipage}
\hfil
\begin{minipage}{.6\linewidth}
\begin{lstlisting}[numbers=right, captionpos=b, caption={Listing 1 using SHAPES.},frame=tb, aboveskip=3mm, belowskip=3mm]
class Student<p> {
  age: int
  supervisor: Professor
  next: Student<p>
  // ... 
  def getAge(): int {
    return age
  }
}
layout L : [Student] = rec { age, next }, *
pool P : L
var list = new Student<P>
\end{lstlisting}
\end{minipage}

Iterating over a list of students to calculate the average age in
a language like Java involves dereferencing many
\c{Student}-pointers to get the \c{age} and \c{next} fields.
However, because of how data is loaded into the cache (outlined in
\S\,1), in addition to these relevant fields, the \c{supervisor}
field and all adjacent data mapped to the same cache line will
be loaded as well. If that data is not another student in the
list, it is irrelevant to performing this calculation. 

For improved cache utilisation, we would like to:
\Pad
\begin{compactenum}[\quad a)]
\item only load the \c{age} and \c{next} fields into the cache; and
\item store \c{Student} objects from the same list adjacent in
  memory, with no interleaving from unrelated objects.
\end{compactenum}
\Pad
Listing 2 shows Listing 1 
using \SHAPES{}. We accomplish b) by introducing a \emph{pool}---a
contiguous region of storage---for holding \c{Student} objects and
use this pool to hold all the objects of our linked
list (and nothing else). This effectively stores our students like the
left of Figure~\ref{fig:layouts}. Listing 2 shows how the \c{Student}
class is parametrised by the pool parameter (Lines 1 \& 4). The pool
is created on Line 10 and connected to the list on Line 11.

To accomplish a), we alter the layout of objects in the pool
to use a structure of array storage. The layout is declared on Line 9
and used in the pool declaration on Line 10.
Our students are now stored
like the right of Figure~\ref{fig:layouts}. Note how the layout is
orthogonal from the \c{Student} class. 

Finally, if the order of the students in the pool (mostly) matches the
order of the list, iterating over the list will result in (mostly) regular
load patterns with even strides that will be detected by a
prefetcher to bring data into memory ahead-of-time. Unless this
order-alignment happens by construction, a pool-aware moving compacting
garbage collector can be used to create it. Such a collector will compact
respecting pool boundaries, and it is possible to influence moving
semantics on a per-pool basis.

% Tobias: not enough space in the paper for this!
% The whole \SHAPES{} framework is depicted in
% Figure~\ref{shapes_framework}.
% %
% \begin{figure}[!t]
% \includegraphics[width=.45\textwidth]{framework}
% \caption{The whole \SHAPES{} framework. \textbf{TODO: update equivalence
% symbols}}
% \label{shapes_framework}
% \end{figure}
%
% We design an high-level, object-oriented, type-safe programming language, which
% exposes the \SHAPES{} vision. Although, high-level code, modulo types,
% is unware of layouts (\eg{}, changing the layout of an object
% does not change the code for reading and writing that object), its
% code is then compiled to low-level code that takes all layout
% information into account.

\subsection{\SHAPES{} in a Nutshell}

\SHAPES{} adds pools, layout declarations and
parameterises classes and types with pools to
a Java-like language. The number of pools
is not fixed and pools are created at run-time. Objects may
be allocated in pools or in a ``traditional heap.''
Layout specifications specify how objects of a certain class will
be laid out in a pool of a specific layout by grouping fields
together. Allocation takes place in a pool using the layout
specification of that pool. As we do not yet model deallocation or garbage
collection, pools can be thought of as growing monotonically in this paper. 

The first parameter in a class declaration indicates the pool the object will be
allocated into.  The remaining parameters can be used in the type declarations
of fields, parameter types and return types of methods and as local variables
inside them. Similarly, types are annotated with pool arguments. In that
respect, \SHAPES{} is similar to parametric polymorphism in C++
\cite{cpp11standard} and Java \cite{java8spec} and to Ownership Types
\cite{OwnershipSurvey}.

The type system for shapes follows our desire for pools to be
\emph{homogeneous}. A pool $P$ of objects of a class $C$ is
homogeneous iff for all objects $o_1, o_2\in P$ and for all non-primitive
fields $f\in C$, $o_1.f$ and $o_2.f$ always point to objects in the same pool (or null).

To prevent the occurrence of heterogeneous pools, we require
that the types of objects in the same pool share the same set of parameters. This
falls out of the (upcoming) rules for well-formed types.
We favour homogeneous pools for the following reasons:
\Pad
\begin{compactitem}[--]
\item \emph{Smaller memory footprint.} Pointers to pools consist of a
  pool address and an index. By using homogeneous pools, the pool
  address becomes redundant, as all objects of a pool will point
  to objects allocated inside a specific pool for a specific field
  $f$.

\item \emph{Better cache utilisation}. A direct result of the
  smaller footprint.

\item \emph{No need for run-time support.} Heterogeneous pools
  allow objects allocated inside them to point to objects
  allocated in different pools. These pools may have different
  layouts, forcing a run-time check on each field access at
  run-time to obtain the correct address to load. This issue is
  eliminated for homogeneous pools.
\end{compactitem}
\Pad

\paragraph{\SHAPES{} and Performance}

In a nutshell, the \SHAPES{} design currently targets programs that
perform iteration over subsets of pooled objects, roughly in the order of access (pool allocation creating prefetcher-friendly access patterns).
If objects in data structures are spread over multiple pools, programmers must manually align the objects in the pools to be cache-friendly (although we have ideas on how to automate this in future work). When objects are split into clusters, the cost of one-off accesses to objects increase because objects are spread over multiple locations that must be loaded separately. However,
iterations become more efficient by loading only relevant data into cache. Note that performance means both execution time and power-efficiency stemming from improved cache-utilisation.

% However, unlike Java Generics, the class
% parameters are pool values (\ie{}, run-time entities) rather than
% types.

% Whereas Ownership Types organise objects of any kind under their
% owner (another object), \SHAPES{} organise
% objects of a single class into pools. Additionally, in
% ownership types, the class parameters are objects, while in \SHAPES{}
% the class parameters are pools, and pools are distinct from objects.

% It is not mandatory to use layouts in \SHAPES{}. This is a
% reasonable design, given that memory optimisations are likely
% relevant for particular parts of a program, not entire programs.
% To this end, 
% \SHAPES{} defines a
% ``global pool'' (called the $\nonek$ pool) that permits the instantiation
% of objects of different types and performs no splitting on an object's
% fields. % This is similar to, \eg{} the way the JVM lays out objects on
% % the heap. Therefore, \SHAPES{} is effectively a superset of an
% % object-oriented language. 

%%% Local Variables: ***
%%% mode: latex ***
%%% TeX-master: "main.tex"  ***
%%% ispell-local-dictionary: "british"  ***
%%% End: ***

\section{Formalising \SHAPES{}}
\label{sec:formalism}

The syntax of \SHAPES{} is shown in Figure~\ref{fig:syntax}. To
simplify the formalism, we do not support programmer conveniences
including ones used in Listing 2: the \c{*} notation used to group
remaining fields (Line 9) and defaulting to store professors
in the $\nonek$ pool (Line 3).

\newcommand{\SForm}[3]{\ensuremath{\mathit{#1}\in\mathit{#2}} & ::= & \ensuremath{#3}}
\begin{figure}[b]
  $
  \begin{array}{rcl}
    \SForm{prog}{Program}{\cds^\star\ \tds^\star} \\
    \SForm{\cds}{ClassDecl}{\classk\ \ownerT{C}{\pds^+}\ \ \{\ \fds^\star \mds^\star\ \}} \\
    \SForm{\fds}{FieldDecl}{f\colon\concretetype ~ ;} \\
    \SForm{\mds}{MethodDecl}{\defk~m(x\colon\concretetype)\colon\concretetype~\{\lds ~ ; ~ e\}} \\
    \SForm{\lds}{PoolsDecl}{\poolsk ~ \lpds^\star ~\localsk ~ \lvds^\star} \\
    \SForm{e}{Expression}{\nullk \grmor x \grmor \thisk \grmor \newc \concretetype \grmor \scallc x m x } \\
& \grmor & \ensuremath{\freadc x f \grmor \fwritec x f x \grmor x = e} \\
    \SForm{y}{\setDef{PoolVariable}}{\nonek \grmor v} \\
    \SForm{x}{\setDef{LocalVariable}}{v} \\
    \SForm{\tds}{\setDef{LayoutDecl}}{\layoutk\ L\colon\pooltype C = \cls^+} \\
    \SForm{\cls}{\setDef{RecordDecl}}{\recordk\ \{ f^+\};} \\
    \SForm{\pds}{\setDef{PoolParameterDecl}}{y\colon \pbds ~ \mathit{where} ~ y \neq \nonek} \\
    \SForm{\concretetype}{\setDef{ClassType}}{\ownerT C {y^+}} \\
    \SForm{\abstracttype}{\setDef{PoolType}}{\ownerT L {y^+}} \\
    \SForm{\pbds}{\setDef{PoolBoundType}}{[\ownerT {C} {y^+}]} \\
    \SForm{\lpds}{\setDef{LocalPoolDecl}}{y\colon \abstracttype ~ \mathit{where} ~ y \neq \nonek} \\
    \SForm{\lvds}{\setDef{LocalVariableDecl}}{ v\colon \concretetype} \\
  \end{array}
  $
% $\begin{array}{rl}
% {\mathit prog} \in \setDef{Program}
%     \grmeq ~& \cds^\star\ \tds^\star
% \\
% %%
% \cds \in \setDef{ClassDecl}
%     \grmeq ~& \classk\ \ownerT{C}{\pds^+}\
%     \ \{\ \fds^\star \mds^\star\ \}
% \\
% %%
% \fds \in \setDef{FieldDecl}
%     \grmeq ~& f\colon\concretetype ~ ;
% \\
% %%
% \mds \in \setDef{MethodDecl}
%     \grmeq ~& \defk~m(x\colon\concretetype)\colon\concretetype~\{\lds ~ ; ~ e\}
% \\
% %%
% \lds \in \setDef{LocalsDecl}
%     \grmeq ~& \poolsk ~ \lpds^\star ~\localsk ~ \lvds^\star
% \\
% \mathit{e} \in \setDef{Expression}
%   \grmeq & ~
%   \nullk
%   \grmor x
%   \grmor \thisk
%   \grmor \newc \concretetype
%   \\
%   \grmor ~& \scallc x m x
%   \grmor \freadc x f
%   \grmor \fwritec x f x
%   \grmor x = e
% \\
% y \in \setDef{PoolVariable} \grmeq ~& \nonek \grmor v
% \\
% x \in \setDef{LocalVariable} \grmeq ~& v
% \\
% \tds \in \setDef{LayoutDecl} & \grmeq
%   \layoutk\ L\colon\pooltype C = \cls^+
% \\%%
% \cls \in \setDef{RecordDecl} & \grmeq
%   \recordk\ \{ f^+\};
% \\%%
% \pds \in \setDef{PoolParameterDecl} & \grmeq
%     y\colon \pbds ~ \mathit{where} ~ y \neq \nonek
% \\%%
% \concretetype \in \setDef{ClassType} &  \grmeq
%   \ownerT C {y^+}
% \\
% \abstracttype \in \setDef{PoolType} &  \grmeq
%   \ownerT L {y^+}
% \\
% \pbds \in \setDef{PoolBoundType} & \grmeq
%     [\ownerT {C} {y^+}]
% \\%%
% \lpds \in \setDef{LocalPoolDecl} & \grmeq
%   y\colon \abstracttype ~ \mathit{where} ~ y \neq \nonek
% \\%%
% \lvds \in \setDef{LocalVariableDecl} & \grmeq
%   v\colon \concretetype
% \end{array}$

\caption{Syntax of \SHAPES{} where $v\in\setDef{VariableId}$, $C\in\setDef{ClassId}$, $f\in\setDef{FieldId}$, $m\in\setDef{MethodId}$, and $L\in\setDef{LayoutId}$.}
\label{fig:syntax}
\end{figure}

\paragraph{Notation and Implicit requirements.}

We append $s$ to names to indicate sequences; for instance, $\many x$
is a sequence of $x$-s, while $\many{\many x}$ is a sequence of sequences of $x$-s.

We assume that layout and class identifiers are unique within the same program,
and field and method identifiers are unique within the same class.

\paragraph{Lookup Functions}

\SHAPES{} rely on the following set of lookup functions. Their exact
definitions are in Appendix~\ref{section:lookup}.

\begin{center}
\begin{tabular}[c]{|c|p{68mm}|}
  \hline %% -------------------------------------------------------
  \sf Fun. & \sf Used to Lookup \\
  \hline %% -------------------------------------------------------
  $\mathcal C$ & The class declaration for a given \emph{class} identifier. \\
  $\mathcal P\!\mathit{s}$ & All the class \emph{pool parameters} of a given class. \\
  $\mathcal P$ & The \emph{bound} of a given parameter in a given class. \\
  $\mathcal M$ & The return type, the parameter type, local variables, and expression of a given \emph{method}. \\
  $\mathcal F$ & The type of a given \emph{field} in a given class. \\
  $\mathcal F\!s$ & \emph{All the field identifiers} declared in a given class. \\
  $\mathcal L$ & The layout declaration of a given layout identifier. \\
  $\mathcal W$ & The offset(s) (\ie{}, \emph{where} it is located) of a given field, within an object, or within a pool. \\
  \hline %% -------------------------------------------------------
\end{tabular}
\end{center}

% \begin{compactitem}[--]
%     \item $\mathcal C$ gives the class declaration for a given \emph{class} identifier.

%     \item $\mathcal P\mathit{s}$ gives all the
%         class \emph{pool parameters} of a given class.

%     \item $\mathcal P$ gives the \emph{bound} of a given parameter in a given class.

%     \item $\mathcal M$ gives the return
%         type, the parameter type, local variables, and expression of a given \emph{method}.

%     \item $\mathcal F$ gives the type of a given \emph{field} in a given class.

%     \item $\mathcal F\!s$ gives \emph{all the field identifiers} declared in a given class.

%     \item $\mathcal L$ gives the layout declaration of a given layout identifier.

%     \item $\mathcal W$ gives the
%         offset(s) (i.e. \emph{where} it is located) of a given field, within an object, or within a pool.
% \end{compactitem}
%

%%%%%%%%%%%%%%%%%%%%%%%%%%%%%%%%%%%%%%%%%%%%%%
%%%%%%%%%%%%%%%%%%%%%%%%%%%%%%%%%%%%%%%%%%%%%%
\subsection{Dynamic Semantics}
\label{sec:dynamics}

\SHAPES{}'s run-time entities  are defined in Figure~\ref{fig:entities}.

\begin{figure}[!t]
$
\begin{array}{rcl}
\heap \in \setDef{Heap} & = & \setDef{Address} \rightharpoonup (\setDef{Object} \cup \setDef{Pool}) \\
\sframe \in \setDef{SFrame} & = &
    \setDef{VariableId} \rightharpoonup (\setDef{Value} \cup \setDef{PoolAddress}) \\ &
    & \cup~ \{ \nonek \} \rightarrow \{ \nonek \} \\
\setDef{Object} & = &
  \setDef{ClassId} \times \setDef{PoolAddress}^+ \times \setDef{Record}\\
\setDef{Pool} & = &
    \setDef{LayoutId} \times \setDef{PoolAddress}^+ \times \setDef{Cluster}^+ \\
\record \in \setDef{Record} & = &
  \setDef{Value}^+ \\
\cluster \in \setDef{Cluster} & = &
  \setDef{Record}^+ \\
\hlv \in \setDef{Value} & = & \setDef{ObjectAddress}
  \cup \setDef{Location}
  \cup \{\nullk\} \\
\setDef{Location} & = &
  \setDef{HeapPoolAddress} \times \setDef{Index} \\
n \in \setDef{Index} & = &
  \natnumber \\
\address\in\setDef{Address} & = &
  \setDef{ObjectAddress} \uplus \setDef{HeapPoolAddress} \\
\object \in \setDef{ObjectAddress} & \\
\pool \in \setDef{PoolAddress} & = & \setDef{HeapPoolAddress} \cup \{ \nonek \} \\
\end{array}
$
\caption{Dynamic Entities of \SHAPES{}.}
\label{fig:entities}
\end{figure}
A heap ($\heap$) maps addresses to %``standalone'' (\ie{} not allocated on any pool)
objects ($\object$) and pools ($\pool$). Objects consist of a class identifier (determining its type),
a sequence of pool addresses and a record ($\record$). A record is a sequence of values representing
the values in the fields of an object.

Pools consist of a layout identifier, a sequence of pool addresses and a
sequence of clusters ($\cluster$). The layout identifier determines how the
objects inside the pool are laid out. Pools can only store instances of the
class indicated by  the layout identifier. Furthermore, the layout type
determines the type of the pool and the type of the objects stored inside it.
\SHAPES{} also defines a global pool called $\nonek$ that permits objects of any
type and no splitting is performed. This is similar to the default heap
representation of \eg{} the JVM. %$\nonek$ is the pool where standalone objects reside.

The pool addresses of both objects and pools are ghost state intended to be used
for proofs; we will demonstrate in future work that they are superfluous and
serve no purpose at run-time.

The values ($\hlv$) corresponding to the fields of objects that are allocated in
pools are stored in clusters. A layout declaration designates splits; each split
contains a subset of the class' fields. Each cluster, therefore, contains the
values that correspond to the respective split's fields for all objects
allocated in that pool. Addresses of objects inside pools $(\pool, n)$ require
a pool address $\pool$ and an index $n$. The index indicates which record in
each cluster contains the values that the fields of the instance correspond to.

A frame ($\sframe$) maps variables to values. \SHAPES{} designates three kinds of
local variables:

\Pad
\begin{compactdesc}
  \setlength{\itemindent}{-6mm}
  \setlength{\labelwidth}{0mm}
\item[Local object variables] The $\thisk$ variable, and function
  parameters. These behave exactly like local variables in
  object-oriented languages.

\item[Local pool variables] These correspond to the pools that are
  constructed upon entering a method body and are initially empty.
  The reason local pool variables are defined altogether is
  because we allow reference cycles between objects that are
  allocated inside different pools.

\item[Class pool parameters] The class' pool parameters can be
  used inside method bodies as local variables for type
  declarations, object instantiations, etc.
\end{compactdesc}
\Pad

For convenience in our definitions, we define $\sframe(\nonek) =
\nonek$.

\SHAPES{} semantics rules are of the form $\reduces{\heap,\sframe, e}
{\heap',\sframe', \hlv}$. They take a heap, a stack frame and an expression and
reduce to a new heap, a new stack frame, and a new value, in a big-step manner.

\begin{figure}[!t]
\centering
\begin{gather*}
\hlrule{Value}{
}{
  \reduces{\heap, \sframe, \nullk}{\heap, \sframe, \nullk}
}
~
\hlrule{Variable}{
}{
  \reduces{\heap, \sframe, x}{\heap, \sframe,\sframe(x)}
}
\\
\hlrule{Assignment}{
  \reduces{\heap, \sframe, e}{\heap', \sframe', \hlv}
}{
  \reduces{\heap, \sframe, x = e}{\heap', \sframe'[x\mapsto \hlv], \hlv}
}
\\
\hlrule{New Object}{
  \sframe(\many y) = \nonek\cdot\many\pool
  \quad
  \object\notin\heap
  \quad
  |\fields{C}| = n
}{
  \reduces
    {\heap, \sframe, \newc{\ownerT C {\many y}}}
    {\heap[\object\mapsto(C, \nonek\cdot\many\pool, \nullk^n)], \sframe, \object}
}
\\
\hlrule{Object Read}{
  \sframe(x) = \object
  \quad
  \heap(\object) = (C, \_, \record)
  \quad
  \offset C f = i
}{
  \reduces{\heap, \sframe, \freadc x f}{\heap, \sframe, \record[i]}
}
\\
\hlrule{Object Write}{
  \sframe(x) = \object
  \quad
  \sframe(x') = \hlv
  \quad
  \heap(\object) = (C, \many\pool, \record)
  \quad
  \offset C f = i
}{
  \reduces{\heap, \sframe, \fwritec x f {x'}}
          {\heap[\object\mapsto(C, \many\pool, \record[i\mapsto\hlv])], \sframe, \hlv}
}
\end{gather*}
\caption{Operational semantics for pool-agnostic operations.}
\label{fig:semantics_no_pool}
\end{figure}

\paragraph{Operations on Pool-Agnostic Expressions}

The operational
semantics for these rules are given in Figure~\ref{fig:semantics_no_pool}. They
are unsurprising, with the exception of
\rulename{New Object} and \rulename{Method Call}. \rulename{New Object} deal with creation of unpooled objects (\ie, stored in the $\nonek$ pool). It stores
pool parameters in the objects' run-time types. 
As pool parameters are initialised in methods, and the implicit passing of the current object's pool parameters, \rulename{Method Call} is not pool-agnostic.

\paragraph{Pool-dependent operations}

The operational
semantic is given in
Figure~\ref{fig:semantics_pool}.
\rulename{New Pooled Object} allocates objects inside an existing
pool $\pool$, by appending a new record of values (all initialised
to $\nullk$) for each cluster in the pool. The notation $|\mathit{fs}_i|$
denotes the number of fields in cluster $i$ and $\nullk^{|\mathit{fs}_i|}$ a sequence of $\nullk$ values. 

By \rulename{Pooled Object Read}, accessing a field $f$ of an object (at location $n$) in a pool ($\pool$) requires
looking up the $j$:th value of the $i$:th cluster of the $n$:th object, where $i$ is the the cluster containing $f$, and $j$ the offset into that cluster according to the layout $L$ (by way of helper function $\mathcal{W}$). 
For brevity, we conflate the latter into a single 3-ary lookup: $\many\cluster[i,n,j]$.

As shown by \rulename{Pooled Object Write}, modifying a field is isomorphic.
We use a shorthand for updating, $\many\cluster[i,n,j\mapsto\beta]$, similar to lookup.

Note that the syntax for constructing objects and accessing/mutating their members is
the same regardless of layout and whether the object is allocated
in a pool or not.

% This rule mutates the underlying pool by appending to the end
%of it in order to zero initialise all of the object's fields. To read and mutate
%a pooled object's fields, we use the \rulename{Pooled Object Read} and
%\rulename{Pooled Object Write} sets of rules. When storing addresses to objects
%residing inside pools, we also store the pool itself. Although this is redundant,
%it is beneficial for us for proving some of our theorems.
\begin{figure}[!t]
\centering
\begin{gather*}
\hlrule{New Pooled Object}{
  \sframe(\many y) = \pool\cdot\_
  \quad
  \pool \neq \nonek
  \\
  \heap(\pool) = (L, \many\pool, \cluster_0~..~\cluster_n)
  \quad
  \getlayout L = (C, \many {f\!}_0~..~\many {f\!}_n)
  \\
  \heap' = \heap[\pool\mapsto(L, \many\pool,
              \cluster_0 \cdot \nullk^{|\many {f\!}_0|}
              ~..~
              \cluster_n \cdot \nullk^{|\many {f\!}_n|})]
}{
  \reduces
    {\heap, \sframe, \newc{\ownerT C {\many y}}}
    {\heap', \sframe, (\pool, |\cluster_0|)}
}
\\
\hlrule{Pooled Object Read}{
  \sframe(x) = (\pool, n)
  \quad
  \heap(\pool) = (L, \many\pool, \many\cluster)
  \quad
  \offset L f = (i, j)
}{
  \reduces{\heap, \sframe, \freadc{x}{f}}
          {\heap, \sframe, (\pool, \many\cluster[i, n, j])}
}
\\
\hlrule{Pooled Object Write}{
  \sframe(x) = (\pool, n)
  \quad
  \heap(\pool) = (L, \many\pool, \many\cluster)
  \quad
  \offset L f = (i, j)
  \\
  \sframe(x') = \hlv
  \quad
  \heap' = \heap[\pool\mapsto(L, {\many\pool, \many\cluster}[i,n,j\mapsto\hlv])]
}{
  \reduces{\heap, \sframe, \fwritec x f {x'}}{\heap', \sframe, \hlv}
}
\end{gather*}
\caption{Operational semantics for pool dependent operations.}
\label{fig:semantics_pool}
\end{figure}

\paragraph{Method Call}

The operational semantics for method invocation are presented in
Figure~\ref{fig:semantics_method_call} as two separate rules for
readability. The second rule, \rulename{Variable/Pool Declaration}
is only used from inside the first, \rulename{Method Call}. By
\rulename{Method Call}, a method call proceeds by constructing a
new stack frame $\sframe$ for the method $m$, populating it with the current \texttt{this}, the method parameter $x'$ and the \texttt{this}' pool parameters.
In a big-step way, it then proceeds to evaluate the method's body in the context
of the new stack frame
using \rulename{Variable/Pool Declaration} and returning the resulting value $\beta$. 

\rulename{Variable/Pool declaration} initialises the
method's local variables. For simplicity, we require all local
variables to be declared upfront and initiaise local object variables $x$ 
to $\nullk$. For pool variables $y$, new (empty) pools are
constructed in a two-step manner: The pools are first reserved on the
heap and then they are actually constructed, along with the stack
frame. This enable cycles among
pools.

\begin{figure}[!t]
{
\begin{gather*}
\hlrule{Method Call}{
  \mathit{getThis}(\heap, \sframe(x)) = (C, \many\pool, \hlv)
  \quad
  \mlookup C m = (\_, x': \_, \lds, e)
  \\
  \reduces{\heap,
    [\thisk\mapsto\hlv, x'\mapsto\sframe(x''), \owners C \mapsto \many\pool],
    \lds; e}{\heap', \_,\hlv'}
}{
  \reduces{\heap, \sframe, \scallc x m {x''}}{\heap', \sframe, \hlv'}
}
\\
\hlrule{Variable/Pool Declaration}{
  \lds =
  \poolsk ~    y_1 \!\colon {\ownerT {L_1} {\many y_1}}
          ~..~ y_n \!\colon {\ownerT {L_n} {\many y_n}} ~~ \localsk ~ x_1 \!\colon \_ ~..~ x_m \!\colon \_
  \\
  \pool_1, ~..~, \pool_n \notin\heap
  \quad
  \forall i, j. ~ \pool_i = \pool_j \rightarrow i = j
  \\
  \heap' =
    \heap[\pool_1\mapsto\mathit{init}({\ownerT{L_1}{\many y_1}}, \sframe'),
          ~..~,
          \pool_n\mapsto\mathit{init}({\ownerT{L_n}{\many y_n}}, \sframe')]
  \\
  \reduces{\heap', \sframe[x_1 ~..~ x_m \mapsto \nullk^m, y_1 ~..~ y_n \mapsto \pool_1 ~..~ \pool_n], e}{\heap'', \_,\hlv'}
}{
  \reduces{\heap, \sframe, \lds; e}{\heap'', \sframe, \hlv'}
}
%\\
%\hlrule{Method Call}{
%  \mathit{getThis}(\heap, \sframe(x)) = (C, \many\pool, \hlv)
%  \quad
%  \mlookup C m = (\_, x': \_, \lds, e)
%  \\
%  \poolsk ~    y_1 \!\colon {\ownerT {L_1} {\many y_1}}
%          ~..~ y_n \!\colon {\ownerT {L_n} {\many y_n}} ~~ \localsk ~ x_1 \!\colon \_ ~..~ x_m \!\colon \_ = \lds
%  \\
%  \pool_1, ~..~, \pool_n \notin\heap
%  \quad
%  \forall i, j. ~ \pool_i = \pool_j \rightarrow i = j
%  \\ [+1 mm]
%  LV = [\thisk\mapsto\hlv, x'\mapsto\sframe(x''), \owners C \mapsto \many\pool]
%  \\
%  LP = [x_1 ~..~ x_m \mapsto \nullk^m] \quad CP = [y_1 ~..~ y_n \mapsto \pool_1 ~..~ \pool_n]
%  \\
%  \sframe' = LV \uplus LP \uplus CP
%  \\ [+1 mm]
%  l_1 = {|\getlayout {L_1} ~ \getInPos 2|} \quad .. \quad l_n = {|\getlayout {L_n} ~ \getInPos 2|}
%  \\
%  \heap' = \heap[\pool_1 \mapsto (L_1, \sframe'(\many y_1), \emptyset^{l_1}),
%                 ~..~,
%                 \pool_n \mapsto (L_n, \sframe'(\many y_n), \emptyset^{l_n})]
%  \\
%  \reduces{\heap', \sframe', e}{\heap'', \_,\hlv'}
%}{
%  \reduces{\heap, \sframe, \scallc x m {x''}}{\heap'', \sframe, \hlv'}
%}
\end{gather*}
}

\noindent\raggedright
Where:

\noindent{$\arraycolsep=1.5pt
\begin{array}{lll}
  \mathit{getThis}(\heap, \object) &\equiv
     (C, \many\pool, \object)
    & \mathit{iff}~
    \heap(\object) = (C, \many\pool, \_)
\\
  \mathit{getThis}(\heap, (\pool, n)) &\equiv
     (C, \many\pool, (\pool, n))
    & \mathit{iff}~
    \heap(\pool) = (L, \many\pool, \_)
    \ands ~ \getlayout L \getInPos 0 = C
\\
  \mathit{init}(\ownerT L {\many y}, \sframe) &\equiv
    (L, \sframe(\many y), \emptyset^n)
    & \mathit{iff}~
    n = {|\getlayout L ~ \getInPos 1|}
\end{array}
$}
\caption{Operational semantics for method call.}
\label{fig:semantics_method_call}
\end{figure}

\subsection{Type System}
\label{sec:statics}

Our type system, in addition to ensuring well-typedness of
run-time objects, ensures that objects are allocated with the
correct layout in the correct pool. This is essential to ensure
that there can be no accesses to undefined memory.

\begin{figure}[b]
\begin{gather*}
\typeRule{Value}{
  \wftype \context {\ownerT C {\many y}}
}{
  \types \context \nullk \ownerT C {\many y}
}
~
\typeRule{Variable}{
}{
  \types \context x \context(x)
}
~
\typeRule{Assignment}{
  \types \context {x, e} {\concretetype}
}{
  \types{\context} {x = e} {\concretetype}
}
\\
\typeRule{New Object}{
  \wftype \context {\ownerT C {\many y}}
}{
  \types \context {\newc \ownerT C {\many y}} \ownerT C {\many y}
}
~
\typeRule{Field Read}{
  \types \context x  \ownerT C {\many y}
}{
  \types \context {\freadc x f} \flookup C f[\owners C/ \many y]
}
\\
\typeRule{Field Write}{
  \types \context x {\ownerT C {\many y}}
  \\
  \flookup C f[\owners C/ \many y] = \concretetype
  \\
  \types \context {x'} \concretetype
}{
  \types \context {\fwritec x f {x'}} \concretetype
}
~
\typeRule{Method Call}{
  \types \context x {\ownerT C {\many y}}
  \\
  \mlookup C m = (\concretetype, \_: \concretetype', \_, \_)
  \\
  \types \context {x'} {\concretetype'[\owners C / \many y]}
}{
  \types \context {\scallc x m {x'}} {\concretetype [\owners C / \many y]}
}
\\
\typeRule{Pool Variable}{\\
}{
  \types \context y \context(y)
}
~
\typeRule{None bound}{\\
    \context \vdash [\ownerT {C} {\many y}]
}{
    \types \context \nonek {[\ownerT {C} {\many y}]}
}
~
\typeRule{Pool bound}{
    \types \context y {\ownerT {L} {\many y}}
    \\
    \getlayout L \getInPos 0 = C
}{
    \types \context y {[\ownerT {C} {\many y}]}
}
\end{gather*}
\caption{Expression type checking.}
\label{fig:typing}
\end{figure}

\paragraph{Expression Types}

The type rules are presented in Figure~\ref{fig:typing}, and have the
form $\types \context e \concretetype$.  $\context$ maps variables to
types:
\begin{align*}
\context\in\setDef{TypingContext} &\grmeq
    x\colon T, \context \grmor \epsilon \\
T\in\setDef{Type} &\grmeq \setDef{ObjectType} \cup \setDef{PoolType} \cup \setDef{PoolBound}
\end{align*}

We distinguish three kinds of types:

\Pad
\begin{compactdesc}
  \setlength{\itemindent}{-6mm}
  \setlength{\labelwidth}{0mm}
\item[Object Types] $\ownerT C {\many y}$ where $C$
  is a class and $\many y$ are pool parameters which correspond to
  the class pool parameters of $C$.

\item[Pool Types] $\ownerT L {\many y}$ where $L$ is
  a layout. If $L$ is a layout that stores objects of class $C$,
  then $\many y$ are pool parameters which correspond to the class
  pool parameters of $C$.

\item[Pool Bounds] $[\ownerT C {\many y}]$ where $C$
  is a class and $\many y$ are pool parameters corresponding to
  the class pool parameters of $C$.

\end{compactdesc}
\Pad

Pool types characterise pool values, \ie{} pools that have been allocated
dynamically (that is, when a method has been called) and are organised
according to a layout. Pool bounds characterise class pool parameters, which
have not yet been initialised and, therefore, do not have a layout. However,
when a method is called, the class pool parameters will have pool values
assigned to them.

The rationale for bounds is that a method may be invoked on an object that could
be allocated on the $\nonek$ pool or a pool that adheres to a specific layout.
Thanks to bounds, we can write code that is agnostic on the kind of pool the
object is allocated.

By \rulename{Value}, $\nullk$ can have any well-formed object type.
By \rulename{Variable}, the type of a variable $x$ is looked-up from $\context$.
By \rulename{Assignment}, assignment to a local variable must match types. While we do not model inheritance or subtyping, these extensions are possible and straightforward.
By \rulename{New Object}, we can create objects from any well-formed type. 
By \rulename{Field Read}, looking up a field $f$ from a receiver $x$ of type $T$ requires that $f$ is in $T$. Furthermore, we must translate the pool parameters names internal to $T$, used in its typing of $f$, to the arguments to which these parameters are bound where the field lookup takes place. We use the helper function $\owners{C}$ to extract the names of the formal pool parameters of the class $C$.
Both \rulename{Field Write} and \rulename{Method Call} must apply similar substitutions to translate between the internal names of the formal parameters and the arguments used at the call-site. 
Otherwise, these rules are standard. 

Roles for pool variables are straightforward. 
By \rulename{Pool Variable}, the type of a pool variable $y$ is looked-up from $\context$.
By \rulename{None Bound}, any well-formed pool type is a bound on the $\nonek$ pool.
By \rulename{Pool Bound}, the bound of a pool $y$ is derived from its pool type.

\begin{figure}[t]
\begin{gather*}
\typeRule{Bound well-formedness}{
    \getclass C \getInPos 0 = y_1\!\colon [\ownerT {C_1} {\many y_1}] ~..~ y_n\!\colon[\ownerT {C_n} {\many y_n}]
    \\
    \forall i\in[1,n]. ~ \types \context {\many y'[i]} {[\ownerT {C_i} {\many y_i}] [y_1 ~..~ y_n / \many y']}
}{
    \context \vdash {[\ownerT {C} {\many y'}]}
}
\\
\typeRule{Type well-formedness}{
    \context \vdash {[\ownerT {C} {\many y}]}
}{
    \context \vdash {\ownerT {C} {\many y}}
}
~
\typeRule{Pool well-formedness}{
    \context \vdash {[\ownerT {C} {\many y}]}
    \quad
    \getlayout L \getInPos 0 = C
}{
    \context \vdash {\ownerT {L} {\many y}}
}
\end{gather*}
\Tigthen
\caption{Type bound well-formedness.}
\label{fig:wf_type}
\Tigthen
\end{figure}

Figure~\ref{fig:wf_type} describes well-formedness of types.
% Well-formedness requires that all pool arguments adhere to their
% bounds coming from the class after all corresponding substitution
% has taken place. 
They are similar to Featherweight Java
\cite{Igarashi2001}. By \rulename{Bound Well-Formedness}, a pool bound with object type $T$ is well-formed if $T$ has all
its formal pool parameters bound to arguments of the correct type, modulo a substitution from the
names of the formal parameters to the actual arguments of $T$. 
By \rulename{Type Well-Formedness}, and \rulename{Pool Well-Formedness},
well-formed object types and pool types can be derived from well-formed pool bounds.

\subsection{Type Checking Examples}

In this section we illustrate how our type rules reject programs
that violate our designs and would therefore suffer performance
penalties (and add complexity to our implementation).

\paragraph{Pool Monomorphism}

With the exception of the $\nonek$ pool, pools in \SHAPES{} are
monomorphic, \ie{} only store objects of a single type. % The
% motivation for monomorphic pools is performance-related: not
% only is it easier to represent pools efficiently, but iteration
% over objects in a pool is a lot more likely to be
% prefetcher-friendly if the objects share a common size and layout.

\begin{lstlisting}
class Student<<ps: [Student<<ps, pp>>], pp: [Professor<<pp, ps>>] >> {
    supervisor: Professor<<pp, ps>>;
    def generate() {
        pools stuPool: StudentSplit<<stuPool, profPool>>
              profPool: ProfessorSplit<<profPool, stuPool>>
        locals stu: Student<<stuPool, profPool>>
               prof: Professor<<profPool, stuPool>> ;

        stu = new Student<<stuPool, profPool>>;    // OK
        prof = new Professor<<stuPool, profPool>>; // Error!
    }
}
...
layout StudentSplit: [Student] = ...;
layout ProfessorSplit: [Professor] = ...;
\end{lstlisting}

\noindent
The above program is rejected because of the attempt to construct
a new \lstinline!Professor! object inside a pool of students on Line 10.

\paragraph{Pool Homogeneity}

Pools in \SHAPES{} are homogeneous. This means that two objects in
a pool cannot have equi-named fields with different types. Consequently,
all objects in a pool can share the same code for dereferencing a field.

\begin{lstlisting}
class Student<<ps: [Student<<ps, pp>>], pp: [Professor<<pp>>] >> {
    supervisor: Professor<<pp>>;
    def generate() {
        pools stuPool: StudentSplit<<stuPool, profPool1>>
              profPool1: ProfessorSplit<<profPool1>>
              profPool2: ProfessorSplit<<profPool2>>
        locals stu: Student<<stuPool, profPool1>>
               prof1: Professor<<profPool1>>
               prof2: Professor<<profPool2>> ;

        stu1 = new Student<<stuPool, profPool1>>;
        stu2 = new Student<<stuPool, profPool2>>;

        stu1.supervisor = new Professor<<profPool1>>; // OK
        stu2.supervisor = new Professor<<profPool2>>; // Error!
    }
}
...
layout StudentSplit: [Student] = ...;
layout ProfessorSplit: [Professor] = ...;
\end{lstlisting}

\noindent
The above program is rejected as it attempts to assign two
students in the same pool to professors in different pools (Line
12 \& 15). If this was legal, emitting code for
\c{student.supervisor.name}, where \c{student} could refer to
either \c{stu1} or \c{stu2}, would be forced to branch on the
layout of the supervisor at run-time.

\section{Well-formedness and Type Safety}
\label{sec:meta}

% \subsection{Well-formed programs}

The main meta-theoretic result of this paper is the type and
memory safety of field accesses, invariant of layout changes.
Going back to Figure \ref{fig:layouts}, a programmer cannot directly reference
$(f,f')_i$ in the right-most representation. To access ``$i.f$'',
we must find the $i$th $f$ in the array. To extract the object
$(f,f')_i$, we must take care not to accidentally return
$(f_i,f'_j)$, which would be an object created out of thin air.
This is a consequence of the broken object abstraction.

Therefore, it is necessary for a type safety definition of \SHAPES{}
to show that given two aliasing references to a pooled object, the values
yielded from accessing the respective fields must be always equal. This must
also take into consideration that the types of the variables/fields that store
these references may have different conceptual types (for instance, during a
method call, the pool parameters of a variable may be renamed).

In the earlier sections, we have treated the program as implicitly
given. Here too, we assume its existence and we assume its
well-formedness. A program is well-formed if all of its class and layout
declarations are well-formed. The definition of the former is quite standard and
the definition of the latter checks that the layout declaration for a given
class considers all the fields of that class and that no field appears repeated
in different clusters. Formal definitions can be found in
Appendix~\ref{sect:wf:prog}.

Because of the above aliasing requirement, we need to define well-formedness in
such a way that the requirement holds. We also want to show that the
well-formedness definition allows us to perform a few optimising
transformations, while showing that the compiled code still preserves the
layout requirements of all pools and returns appropriate values (we leave the
definition of such a compilation for future work). Two of these optimisations
are the removal of pool parameters from objects and pools and the simplification
of pool addresses from a pool and an index $(\pool, n)$ to simply an index
($n$).

Therefore, we define a pool-aware and layout-aware definition
of well-formed configurations that is stronger than a typical definition of
well-formed configurations in object-oriented languages. Well-formed
configurations in \SHAPES{} require, among other things, that all objects in a
pool have the same class (the one required in the pool's layout type), and that
all the fields of an object point to objects which have classes and are in pools
as described in the object's type. We formally define
well-formed configuration in \S\,\ref{sec:wfc}.

We state type safety for \SHAPES{} in Theorem~\ref{soundness}, in the
sense that if a well-formed configuration takes a reduction step, then
the resulting configuration is well-formed too.
\begin{theorem}[\label{soundness}Type Safety]
For a well-formed program with heap $\heap$, stack
frame $\sframe$, corresponding typing context $\context$, and
expression $e$,

\Pad
\textbf{If }
$
  \wfHL{\context}{\heap}{\sframe}
  \ands
    \types{\context}{e}{\ownerT C {\many x}}
  \ands
  \reduces{\heap,\sframe,e}{\heap',\sframe',\hlv}
$

\textbf{then }
$
  \wfHL{\context}{\heap'}{\sframe'}
  \ands
  \wfweakOK{\heap'}{\hlv}{\ownerT C {\sframe'(\many x)}}
$
\end{theorem}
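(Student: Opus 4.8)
The plan is to prove the theorem by rule induction on the derivation of the reduction $\reduces{\heap,\sframe,e}{\heap',\sframe',\hlv}$, doing a case analysis on the last rule applied and, in each case, inverting the typing derivation $\types{\context}{e}{\ownerT C {\many x}}$ to recover the structure of $e$ and its subexpressions. The two obligations --- preservation of configuration well-formedness $\wfHL{\context}{\heap'}{\sframe'}$ and the weak typing $\wfweakOK{\heap'}{\hlv}{\ownerT C {\sframe'(\many x)}}$ of the result --- are carried through together, so that the inductive hypothesis for a subderivation (the premise of \rulename{Assignment}, or the body evaluation nested inside \rulename{Method Call}) delivers exactly what the enclosing case consumes. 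Since the semantics is big-step and method bodies are not subexpressions of the call, induction on the reduction derivation (rather than on the typing derivation) is the right choice, and it also naturally subsumes the auxiliary reduction form $\lds; e$ handled by \rulename{Variable/Pool Declaration}.

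Before the main induction I would set up a few auxiliary lemmas. (i) A \emph{heap-monotonicity} lemma: extending $\heap$ with a fresh object/pool address, or updating an existing object's record or pool's clusters to well-typed values, preserves every previously-established object/pool well-formedness fact and every weak typing, because nothing reachable changes shape. (ii) A \emph{stack-lookup} lemma: from $\wfHL{\context}{\heap}{\sframe}$ and $\context(x) = \ownerT C {\many x}$ derive $\wfweakOK{\heap}{\sframe(x)}{\ownerT C {\sframe(\many x)}}$, and likewise for pool variables; this settles \rulename{Variable} and \rulename{Pool Variable}. (iii) A \emph{frame-update} lemma: replacing $\sframe(x)$ by a value weak-typed at $\context(x)$ instantiated via $\sframe$ keeps the frame well-formed; with the IH this settles \rulename{Assignment}. (iv) The crucial \emph{substitution} lemma relating the static renaming $[\owners C / \many y]$ of formal pool parameters to the dynamic frame: if $\hlv$ is an object or location whose stored pool parameters are $\sframe(\many y)$ and $\flookup C f = \concretetype$, then the value in slot $f$ is weak-typed at $\concretetype[\owners C / \many y]$ evaluated over $\sframe$, and the analogue phrased through $\getlayout L$, $\offset C f$ and $\offset L f$ for pooled receivers. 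This last lemma is precisely where the pool-monomorphism and pool-homogeneity conditions baked into well-formed configurations are used: they guarantee the cluster/record shape of $\heap(\pool)$ matches $\getlayout L$, so the offsets $(i,j)$ land on a value of the field's type, and that two aliases of a pooled object read back equal values.

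With these in place the routine cases go through mechanically. \rulename{Value} is immediate. \rulename{Object Read} and \rulename{Pooled Object Read} combine stack-lookup with the substitution lemma. \rulename{New Object} and \rulename{New Pooled Object} use heap-monotonicity on the old heap plus a direct check that the freshly added object (all fields $\nullk$, pool parameters $\nonek\cdot\many\pool = \sframe(\many y)$) or the freshly extended pool (one $\nullk$-record appended to every cluster, so the match with $\getlayout L$ is preserved, returned index $|\cluster_0|$) is well-formed, using well-formedness of $\ownerT C {\many y}$ from the typing rule; the returned value is then weak-typed by construction. \rulename{Object Write} and \rulename{Pooled Object Write} read off from the typing rule that $\sframe(x')$ already carries the field's instantiated type, and then invoke the frame/heap-update and monotonicity lemmas.

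The genuinely delicate cases are \rulename{Method Call} together with \rulename{Variable/Pool Declaration}, and I expect this to be the main obstacle. For \rulename{Method Call}, I would first use $\types{\context}{x}{\ownerT C{\many y}}$, $\mathit{getThis}(\heap,\sframe(x)) = (C,\many\pool,\hlv)$ and stack well-formedness to show the method-entry frame $[\thisk\mapsto\hlv,\, x'\mapsto\sframe(x''),\, \owners C\mapsto\many\pool]$ is well-formed against the method's typing context (formal $\thisk : \ownerT C{\owners C}$, parameter $x' : \concretetype'$, and the class pool parameters with their declared bounds): the substitution lemma translates $\concretetype'[\owners C/\many y]$ --- which $\sframe(x'')$ satisfies by the typing of $x''$ --- into $\concretetype'$ over the new frame, and the bound obligations on $\many\pool$ come from well-formedness of $\ownerT C{\many y}$. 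Then \rulename{Variable/Pool Declaration} must be treated: because local pools may reference one another, the fresh pools $\pool_1,\dots,\pool_n$ have to be shown well-formed \emph{simultaneously} --- each is built by $\mathit{init}$ with empty clusters (the layout-match invariant holds vacuously) but with pool-parameter list $\sframe'(\many{y_j})$ that may mention the other $\pool_i$, so this is a single check that the mutual references all type-check by \rulename{Pool}/\rulename{Bound well-formedness} once the frame has been extended with all of them at once. Extending the frame further with $\nullk$-initialised local object variables again preserves well-formedness, whereupon the IH on the body reduction gives $\heap''$ well-formed and $\hlv'$ weak-typed at the method's return type over the extended frame; a final use of the substitution lemma re-expresses this as the caller-side type $\concretetype[\owners C/\many y]$ over $\sframe$, which is what \rulename{Method Call} must return, the caller frame $\sframe$ being untouched. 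I would also verify that heap-monotonicity correctly bridges the caller's well-formedness obligations across these nested reductions.
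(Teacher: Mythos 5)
Your proposal follows exactly the strategy the paper itself gives: structural induction over the derivation of the reduction $\reduces{\heap,\sframe,e}{\heap',\sframe',\hlv}$, with case analysis on the last rule applied. The paper offers only a one-line sketch, so your auxiliary lemmas and case-by-case elaboration (including the correct identification of the method-call and mutual-pool-initialisation cases as the delicate ones) are a consistent and more detailed filling-in of the same argument rather than a different route.
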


\begin{proof}[Proof sketch]
By structural induction over the derivation
$\reduces{\heap,\sframe,e}{\heap',\sframe',\hlv}$.
\end{proof}

\section{Related Work}
\label{sec:related}

For an extended coverage of related work, see Franco et al. \cite{Onward2017}

\paragraph{Memory layouts}

The idea of data placement to reduce cache misses was first introduced by Calder
et al.~\cite{calder-1998}, applying profiling techniques to find
temporal relationships among objects.

This work was then followed up by Lattner et
al.~\cite{lattner-2003,lattner-2005a} where rather than relying on profiling,
static analysis of C and C++ programs finds what layout to use.
Huang et al.~\cite{huang-2004} explore pool allocation in the context of Java.

Ureche et al. \cite{Ureche2015} present a Scala extension that allows
automatic changes to the data layout. The developer defines transformations
and the compiler applies the transformation during code generation.

\paragraph{Heap partitioning}

Deterministic Parallel Java also provides means to split data in the heap:
Java code is annotated with regions
information used to calculate the effects of reading
and writing to data~\cite{BocchinoETAL2009DPJ}. Loci~\cite{loci} split the
heap into per-thread subheaps plus a shared space.
Note that these languages only divide the heap \emph{conceptually}, and do
not aim to affect representation in memory.

Jaber et al. \cite{Jaber2017} present a heap partitioning scheme that works by
inferring ownership properties between objects.

In the context of NUMA systems, Franco and Drossopoulou use annotations
to describe in which NUMA nodes the objects should be
placed~\cite{franco2015behavioural},with the aim to improve program performance by reducing
memory accesses \emph{to remote nodes}, ignoring any possible in-cache data accesses.

\paragraph{Formalisation}

The \SHAPES{}{} type system is influenced by Ownership types
\cite{clarke_potter_noble:ownership_types} but uses
pools rather than ownership contexts, and without
enforcing a hierarchical decomposition of the object graph (that
is, we allow and handle cycles between pools).

As mentioned above, the concept of bounds and well-formed types is drawn from
Featherweight Java \cite{Igarashi2001}, with the exception that our formalism
does not have any concepts of polymorphism.

Formalisms for automatic data transformations with regards to functional
languages also exist. Leroy \cite{Leroy1992} presents a formalised
transformation of ML programs that allows them to make use of unboxed
representations. Thiemann \cite{Thiemann1995} extends on this work and
Shao \cite{Shao1997} generalises it.

Petersen et al \cite{Petersen2003} describe a model that uses ordered type
theory to define how constructs in high-level languages are laid out in memory.
This allows the runtime to achieve optimisations such as the coalescing of
multiple calls to the allocator.

\section{Final Remarks}
\label{sec:final}

We have presented a formal model (operational semantics and type
system) of \SHAPES{}, an object-oriented programming language that provides
first class support for object splitting and pooling. We have also provided
the definition for a well-formed runtime configuration of \SHAPES{} and justified
our design decisions.

We include an extended discussion of future work in \cite{Onward2017}.
Our next step is showing that the well-formedness of a configuration
is preserved during execution. We will also provide a translation to a low-level
language that will be designed in such a way so as to achieve reasonable
performance and preserve the operational semantics of \SHAPES{} code.

\section{Acknowledgements}

We would like to thank Christabel Neo and the FTfJP reviewers for their
feedback, and in particular, the very useful suggestions on how to make our
explanations and notation better and easier to understand.

\bibliographystyle{abbrv} %{plain}
% \bibliography{bibliography}

\appendix

\section{Lookup functions}
\label{section:lookup}
\begin{align*}
\getclass C\ &\triangleq\ (\many\pds~\many\fds~\many\mds)
~\mathit{iff}~
(\classk~\ownerT C {\many\pds} \{\many\fds~\many\mds\})\in\mathit{prog}\getInPos{0}
    \\
\owners C\ &\triangleq\ y_1~..~y_n ~\mathit{iff}~
\getclass C\getInPos{0} = (y_1\colon\_~,...,~y_n\colon\_)
    \\
\plookup C y\ &\triangleq\ \pbds ~\mathit{iff}~ (y: \pbds) \in \getclass C\getInPos{0}
    \\
 \mlookup C m\ &\triangleq\ (\concretetype, x:\concretetype', \lds, e)
~\mathit{iff}~
(\defk~m(x: \concretetype'): \concretetype~\{\lds; e\})\in\getclass C\getInPos{2}
    \\
\flookup C f\ &\triangleq\ \concretetype ~\mathit{iff}~ (f: \concretetype)\in\getclass C\getInPos{1}
    \\
\fields C\ &\triangleq\ f_1~..~f_n ~\mathit{iff}~ \getclass C\getInPos{1} = (f_1: \_~..~ f_n: \_)
    \\
\getlayout L\ &\triangleq\ (C, \many f_1~..~\many f_n)
~\mathit{iff}~ \\
&\qquad \quad (\layoutk~L: [C] = \recordk\{\many f_1\};~..~\recordk\{\many f_n\})
\in \mathit{prog}\getInPos{1}
    \\
\offset L f &\triangleq (i, j)
~\mathit{iff}~
 \getlayout L = (C, {\many{\many f}}) \ands  {\many{\many f}}[i, j] = f
    \\
\offset C f &\triangleq i ~\mathit{iff}~ \fields C [i] = f
\end{align*}

\section{Well-formed Programs}
\label{sect:wf:prog}

\begin{definition}[\label{def:wfcontext}Well-formed context]
We use the notation $\vdash \context$ to declare that a context is well-formed.
We define:
$$\vdash \context \mathit{iff} \forall (\_: T) \in \context. ~ \wftype \context T \ands
    \forall y. ~ \types y \context {[\ownerT C {\many y}]} \rightarrow \many y [0] = y$$
\end{definition}

\noindent
Given definitions~\ref{def:wfclass} and~\ref{def:wflayout}, we define
well-formed programs as follows:
\begin{definition}[Well-formed program]
A program is well-formed if all its layout and all its class
declarations are well-formed.
\begin{align*}
\vdash \mathit{prog} \mathit{iff}
  (~\forall\cds\in{\mathit prog}\getInPos 0.~
    \wfhlclass{\mathit prog}\cds~)
  \ands
  (~\forall\tds\in{\mathit prog}\getInPos 1.~
    \wfhlclass{\mathit prog}\tds~)
\end{align*}
\end{definition}

\begin{definition}[\label{def:wfclass}Well-formed class declaration]
A class $C$ is well-formed if:
\begin{compactitem}[--]
\item Their first pool parameter has to be annotated with a bound that is of the
    same class and its parameters are the same as in the class declaration
    (and in the same order). That is, if the class pool parameters of the class
    $C$ are $\owners C = y_1~..~y_n$, then $\plookup C {y_1} = [\ownerT {C}{y_1, \ldots, y_n}]$.
\item The parameter list of all pool types must only contain parameters from
    the class' pool parameter list (\ie{} $\owners C$). This means that the $\nonek$
    keyword is disallowed as a pool parameter name.
\item The fields must have class types that are well-formed against the typing
    context $\context$ where the class' formal pool parameters have their corresponding
    bounds as types. Moreover, $\context$ is well-formed.
\item All the methods have a parameter and return type that is well-formed
    against the context $\context$. Moreover, for each method, the corresponding
    method body is typeable against a context $\context'$ which is an
    augmentation of $\context$ and contains the types of $\thisk$
    variable, local pool, and object variables of the method. Moreover $\context'$
    is well-formed.
\end{compactitem}
\begin{align*}
&\wfhlclass
  {\mathit prog}
    {\classk\ \ownerT C{y_1\colon [\ownerT {C_1}{\many y_1}] ~..~ y_n\colon [\ownerT {C_n}{\many y_n}]}} ~ \{ ~ \many \fds ~ \many \mds ~ \}
\mathit{iff}
  \\&\qquad \quad
  \vdash \context \ands C_1 = C \ands \many y_1 = y_1 ~..~ y_n
  \\&\qquad \ands
  \forall f: T \in \many \fds. ~ \wftype \context T
  \\&\qquad \ands
  \forall \defk~m(x\colon\concretetype)\colon\concretetype'~\{\lds ~ ; ~ e\} \in \many \mds. [
  \\&\qquad \qquad \quad
    \wftype \context \concretetype \ands \wftype \context {\concretetype'}
  \\&\qquad \qquad \ands
    \vdash \context' \ands
    \types {\context'} e {\concretetype'}~]
    \\&\qquad \qquad \quad \mathit{where} ~
    \context' = \context, \thisk: \!\ownerT C{y_1 ~..~ y_n}, x: \!\concretetype,
    \\&\qquad \qquad \qquad \quad
        y_1' \!\colon {\ownerT {L_1} {\many y_1'}}, ~..~,  y_k' \!\colon {\ownerT {L_k} {\many y_k'}},
    \\&\qquad \qquad \qquad \quad
        x_1 \!\colon {\ownerT {C'_1} {\many y_1''}}, ~..~,  x_m \!\colon {\ownerT {C'_m} {\many y_m''}}
    \\&\qquad \qquad \qquad
        \lds = \poolsk ~ y_1' \colon {\ownerT {L_1} {\many y_1'}}
            ~..~, y_k' \colon {\ownerT {L_k} {\many y_k'}}
    \\&\qquad \qquad \qquad \qquad
            \localsk ~
            x_1 \!\colon {\ownerT {C'_1} {\many y_1''}} ~..~  x_m \!\colon {\ownerT {C'_m} {\many y_m''}}
    \\&\qquad \mathit{where} ~ \context= y_1\!\colon [\ownerT {C_1}{\many y_1}] ~..~ y_n\!\colon [\ownerT {C_n}{\many y_n}]
\end{align*}

\end{definition}

\vspace{.1in}
\noindent
We now define well-formedness of layout declarations:
\begin{definition}[Well-formed layout declaration]
\label{def:wflayout}
A layout declaration for instances of a class $C$ is well-formed iff the disjoint
union of its clusters' fields is the set of the fields declared in $C$.
\begin{align*}
& \wflayout{\mathit prog}{ \layoutk\ L\colon\pooltype C = \cls_1 ... \cls_n}
\mathit{iff}
\\ & \qquad
\lbrace \fields C \rbrace =
  \biguplus_{i\in{1~..~n}} \{ \many f~|~\cls_i = \recordk\ \{ \many f\} \}
\end{align*}
\end{definition}
This definition excludes repeated or missing fields. For example, if a class
\textit{Video} has 3 fields with names \textit{id}, \textit{likes}, \textit{views},
then both of these layout declarations are ill-formed:
\begin{lstlisting}[numbers=none]
// repeated field
layout Ill_formed_Layout1: [Video] = rec {id, likes} + rec {likes, views}
// missing field
layout Ill_formed_Layout2: [Video] = rec {id} + rec {views}
\end{lstlisting}

\section{Well-formed Configurations}
\label{sec:wfc}

As usual, an object or pool adheres to a type $\ownerT C {\many y}$ or $\ownerT
L {\many y}$, respectively, if the class of the object or pool is $C$, the pool
parameters are $\sframe(\many y)$ and all of its fields or clusters adhere to
their type. To avoid the circularities in the definition of agreement, we break
it down into weak agreement, which only ensures that the run-time type of the
object is the one specified by its class and pool parameters, and strong
agreement, which also considers the contents of the fields or clusters. In
a nutshell, strong agreement checks the vertices of the object graph, while
weak agreement checks the edges.

Although the type system uses local variables in its types, we cannot use them
in the well-formedness definition, because local pool names may change across
function calls. Thus, we use run-time types instead, where the pool parameters
are substituted with pool addresses.

\begin{definition}[Run-Time types]
A run-time type $\tau$ is a defined as follows:
\begin{align*}
    \tau \in \setDef{RuntimeType} \grmeq ~& \setDef{RuntimeClassType} \cup \setDef{RuntimeLayoutType} \\ \cup ~& \setDef{RuntimeBound} \\
    \setDef{RuntimeClassType} \grmeq ~& \ownerT C {\pool_1 ~..~ \pool_n} \\
    \setDef{RuntimePoolType} \grmeq ~& \ownerT L {\pool_1 ~..~ \pool_n} \\
    \setDef{RuntimeBound} \grmeq ~& [\ownerT {C} {\pool_1 ~..~ \pool_n}] \\
\end{align*}
\end{definition}

We now define the well-formedness of a run-time configuration:

%Whenever we are referring to types in the well-formedness of the heap, we will
%implicitly assume that we are referring to run-time types. To distinguish between
%the two, we'll also use notation such as $\ownerT C {\many\pool}$ and $\ownerT C
%{\many x}$, respectively.
%
%Since our definition of well-formedness of a heap mandates that all heap
%addresses (either to an object or to a pool) are also well-formed and due to
%the fact that we also want to deal with cycles in our definition of
%well-formedness, we introduce the concept of strong and weak agreement. To
%denote strong and weak agreement, respectively, we use the symbols $\triangleleft$
%and $:$, respectively.

\begin{definition}[Well-formed high-level configurations]

\noindent
The definition of well-formedness of a heap is as follows:

\noindent
\mymathlcl{
    \wfHeap \heap
    & \mathit{iff} & \forall \address \in \dom\heap. ~ \exists \tau. ~ \wfagrees \heap \address \tau
}

An address can be either an address to heap-allocated object, to a pool,
or to a pool-allocated object, therefore we split the definition of
well-formed address as follows:

\noindent
\mymathlcl{
    \bullet \quad \wfagrees \heap \object {\ownerT C {\many\pool}} ~ \mathit{iff} \\
    \qquad \qquad \phantom{\ands} ~ \heap(\object) = (C, {\many\pool}, \record) \ands \many\pool[0] = \nonek \\
    \qquad \qquad \ands ~ \forall i \in 1 ~..~ n. ~ \wfweakOK {\heap}{\many\pool[i - 1]}{\plookup C {y_i} [y_1 ~..~ y_n / \many\pool]} \\
    \qquad \qquad \ands ~ \forall f. ~ \wfweakOK \heap {\record[\offset C f]} {\flookup C f[\owners C / {\many\pool}]} \\
    \qquad \qquad \phantom{\ands} ~ \mathit{where} ~ y_1 ~..~ y_n = \owners C
}

\noindent
\mymathlcl{
    \bullet \quad \wfagrees \heap \pool {\ownerT L {\many\pool}} ~ \mathit{iff} \\
    \qquad \qquad \phantom{\ands} ~ \heap(\pool) = (L, {\many\pool}, \many\cluster) \ands \many\pool[0] = \pool \\
    \qquad \qquad \ands ~ \forall i \in 1 ~..~ n. ~ \wfweakOK {\heap}{\many\pool[i - 1]}{\plookup C {y_i} [y_1 ~..~ y_n / \many\pool]} \\
    \qquad \qquad \ands ~ |\cluster_1| = \ldots = |\cluster_n| \\
    \qquad \qquad \ands ~ \forall i \in 0 ~..~ |\cluster_1| - 1. ~ \wfagrees \heap {(\pool, i)}
    \ownerT {C} {\many\pool} \\
    \qquad \qquad \phantom{\ands} ~ \mathit{where} ~ y_1 ~..~ y_n = \owners C, \getlayout L \getInPos 0 = C
}

\noindent
\mymathlcl{
    \bullet \quad \wfagrees \heap {(\pool, n)} {\ownerT C {\many\pool}} ~ \mathit{iff} \\
    \qquad \qquad \phantom{\ands} ~ \heap(\pool) = (L, {\many\pool}, \many\cluster) \ands \many\pool[0] = \pool \\
    \qquad \qquad \ands ~ \getlayout L \getInPos 0 = C \\
    \qquad \qquad \ands ~ \forall f. ~ \wfweakOK \heap {\many\cluster[i, n, j]} {\flookup C f[\owners C / {\many\pool}]} \\
    \qquad \qquad \phantom{\ands} ~ \mathit{where} ~ (i, j) = \offset L f
}

\noindent
The definition of weak agreement for objects is as follows:

\noindent
\mymathlcl{
\bullet \quad \wfweakOK \heap \object {\ownerT C {\many\pool}}
    & \mathit{iff} &  \heap(\object) = (C, {\many\pool}, \_) \\
    &\ands& \many\pool[0] \neq \nonek \\
\bullet \quad \wfweakOK \heap {(\pool, n)} {\ownerT C {\many\pool}}
    & \mathit{iff} & \heap(\pool) = (L, {\many\pool}, \_) \\
    &\ands& \many\pool[0] = \pool \ands \getlayout L \getInPos 0 = C \\
\bullet \quad \wfweakOK \heap \nullk {\ownerT C \_} &&
}

\noindent
The definition of weak agreement for pools and bounds is as follows:

\noindent
\mymathlcl{
\bullet \quad \wfweakOK \heap \pool {\ownerT L {\many\pool}}
    & \mathit{iff} & \heap(\many\pool[0]) = (L, {\many\pool}, \_) \\
\bullet \quad \wfweakOK \heap \nonek {[\ownerT {C} \_]} && \\
\bullet \quad \wfweakOK \heap \pool {[\ownerT {C} {\many \pool}]}
    & \mathit{iff} & \heap(\pool) = (L, {\many\pool}, \_) \\
    &\ands& \getlayout L \getInPos 0 = C \ands \pool = \many\pool[0]
}

%It is easy to see that strong well-formedness implies weak agreement as well.

\noindent
Finally, the definition of well-formedness of a stack frame and a heap against a context is as follows:

\noindent
\mymathlcl{
    \wfHL \context \heap \sframe
& \mathit{iff} &
    \wfHeap \heap \\
& \land & \dom\sframe = \dom\context \\
& \land & \forall x \in \dom\sframe. ~ [ \\
&& \quad [\context(x) = {\ownerT C {\many y}} \rightarrow \wfweakOK \heap {\sframe(x)} \ownerT C {\sframe(\many y)}] ~ \land \\
&& \quad [\context(x) = {\ownerT L {\many y}} \rightarrow \wfweakOK \heap {\sframe(x)} \ownerT L {\sframe(\many y)}] ~ \land \\
&& \quad [\context(x) = {[\ownerT {C} {\many y}]} \rightarrow  \wfweakOK \heap {\sframe(x)} {[\ownerT {C} {\sframe(\many y)}]}]\\
&&]
}
\end{definition}

\end{document}